\newcommand{\ie}{\emph{i.e.},\xspace}
\newcommand{\eg}{\emph{e.g.},\xspace}
\newtheorem{theorem}{Theorem}
\newtheorem{lemma}[theorem]{Lemma}
\newcommand{\mmnm}{{LIRDRec}}
\begin{document}

\title{Learning Item Representations Directly from Multimodal Features for Effective Recommendation}

\author{Xin Zhou, Xiaoxiong Zhang, Dusit Niyato,~\IEEEmembership{Fellow,~IEEE,} and Zhiqi Shen,~\IEEEmembership{Senior Member,~IEEE,}

\thanks{Xin Zhou, Xiaoxiong Zhang, Dusit Niyato and Zhiqi Shen are with the College of Computing and Data Science, Nanyang Technological University, Singapore. 
(E-mail: \{xin.zhou, dniyato, zqshen\}@ntu.edu.sg; zhan0552@e.ntu.edu.sg.)}

}


\maketitle

\begin{abstract}
Conventional multimodal recommender systems predominantly leverage Bayesian Personalized Ranking (BPR) optimization to learn item representations by amalgamating item identity (ID) embeddings with multimodal features. Nevertheless, our empirical and theoretical findings unequivocally demonstrate a pronounced optimization gradient bias in favor of acquiring representations from multimodal features over item ID embeddings. As a consequence, item ID embeddings frequently exhibit suboptimal characteristics despite the convergence of multimodal feature parameters. Given the rich informational content inherent in multimodal features, in this paper, we propose a novel model (\ie~\mmnm{}) that learns item representations directly from these features to augment recommendation performance.
Nevertheless, directly leveraging multimodal features for recommendation is a non-trivial endeavor. Recognizing that features derived from each modality may capture disparate yet correlated aspects of items, we propose a multimodal transformation mechanism, integrated with modality-specific encoders, to effectively fuse features from all modalities. This enables the explicit capture of shared information across different modalities, thereby enhancing recommendation efficacy. Moreover, to differentiate the influence of diverse modality types, we devise a progressive weight copying fusion module within \mmnm{}. This module incrementally learns the weight assigned to each modality in synthesizing the final user or item representations. 
Finally, we utilize the powerful visual understanding of Multimodal Large Language Models (MLLMs) to convert the item images into texts and extract semantics embeddings upon the texts via Large Lanuage Models (LLMs).
Empirical evaluations conducted on five real-world datasets validate the superiority of our approach relative to competing baselines. It is worth noting the proposed model, equipped with embeddings extracted from MLLMs and LLMs, can further improve the recommendation accuracy of NDCG@20 by an average of 4.21\% compared to the original embeddings.
Moreover, as our model directly extracts item representations from multimodal features, it exhibits accelerated startup recommendation performance, surpassing state-of-the-art methods in cold-start scenarios. Code available at \url{https://github.com/enoche/LIRDRec}.
\end{abstract}

\begin{IEEEkeywords}
recommender systems, multimodal recommendation, multimodal representation learning.
\end{IEEEkeywords}

\section{Introduction}
\IEEEPARstart{T}{he} ubiquity of multimedia content on the internet and diverse platforms has led to the proliferation of multimodal data, encompassing textual modalities such as product descriptions, as well as visual modalities like images and videos~\cite{baltruvsaitis2018multimodal}. Consequently, multimodal recommender systems (MMRSs) have emerged as a vital research area, harnessing information from multiple modalities, including text, images, and audio, to deliver personalized recommendations to users~\cite{deldjoo2020recommender,salah2020cornac,zhou2023comprehensive,liu2024multimodal,liu2024multimodalKDD}.
Prior studies~\cite{LATTICE2021MM, FREEDOM2023MM,zhou2023enhancing} have demonstrated that MMRSs can effectively augment the user experience by providing more accurate recommendations tailored to individual preferences. Additionally, MMRSs have the potential to mitigate the cold-start problem, which arises when new items or users lack sufficient historical data for reliable recommendations.

The majority of existing Multimodal Recommender Systems (MMRSs) learn item representations by fusing item identity (ID) embeddings with corresponding multimodal features~\cite{VBPR2016AAAI,MGCN2023MM,FREEDOM2023MM}. For instance, VBPR~\cite{VBPR2016AAAI} and GRCN~\cite{wei2020graph} directly concatenate ID embeddings with transformed multimodal features to form item representations. Alternatively, methods such as MGCN~\cite{MGCN2023MM}, TMFUN~\cite{zhou2023attention} and LGMRec~\cite{guo2024lgmrec} augment ID embeddings with multimodal features and employ additive fusion. FREEDOM~\cite{FREEDOM2023MM} adopts a late fusion strategy by summing losses from ID embeddings and multimodal features. While these approaches have demonstrated performance improvements, they fundamentally treat multimodal information as auxiliary to the primary task of learning ID embeddings.

\begin{figure}[bpt]
  \centering
  \includegraphics[width=\linewidth, trim={10 10 10 10},clip]{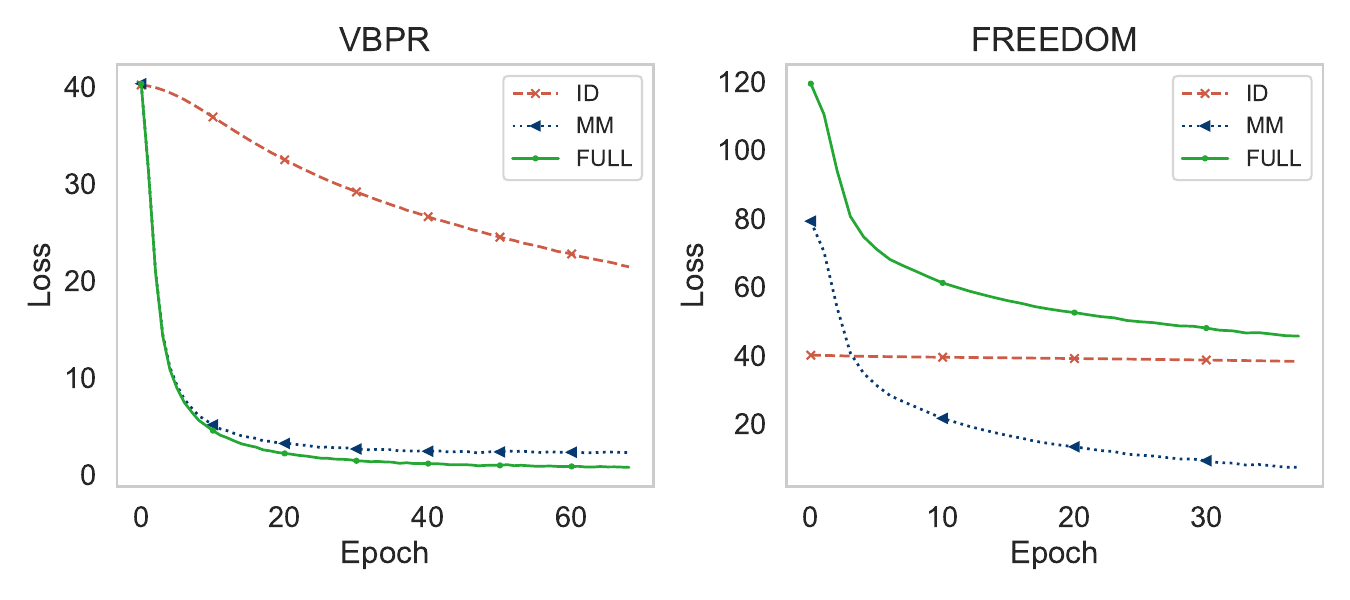}
  \caption[]{Comparison of training losses between ID embedding and multimodal feature (MM) learning in VBPR~\cite{VBPR2016AAAI} and FREEDOM~\cite{FREEDOM2023MM}. The magnitude of the gradient flow in MM learning exhibits a steeper and more pronounced decay compared to the gradients observed in ID embedding optimization.}
  \label{fig:intro_loss}
  \vspace{-10pt}
\end{figure}

To demonstrate the potential detrimental impact of the aforementioned learning paradigm on ID embedding acquisition, we conduct an experimental analysis on two representative multimodal recommender systems: VBPR~\cite{VBPR2016AAAI} and FREEDOM~\cite{FREEDOM2023MM}. Specifically, we decompose the Bayesian Personalized Ranking (BPR) loss~\cite{rendle2009bpr} into components attributed to ID embeddings and multimodal features, respectively, as calculated on the Baby dataset (Table~\ref{tab:datasets}) at each epoch until early stopping. The results are visualized in Fig.~\ref{fig:intro_loss}, which also depicts the total BPR loss.  It is important to note that VBPR employs early fusion, concatenating ID embeddings with transformed multimodal features, resulting in a non-linear relationship between the total loss and its components. 
Our analysis reveals a more rapid decline in the gradient of the multimodal feature component than that of the ID embedding component. Moreover, the gradient of the ID embedding component persists at a notably slower rate even after the multimodal component appears to converge. 
As a result, the optimization of the networks of learning multimodal features is often adversely affected by the optimization of ID embeddings, as its learning loss decline more rapidly than that of ID embeddings(Fig. 1). Consequently, the potential expressiveness of multimodal features is not fully exploited in these existing frameworks. This observation motivated us to develop \mmnm{}, a framework that learns item representations directly from multimodal features without relying on ID embeddings.
Furthermore, our theoretical analysis demonstrates that, under equivalent loss conditions, the gradient magnitude associated with learning ID embeddings is smaller than that associated with learning multimodal features during the early stages of training. In contrast to trainable ID embeddings, typically initialized randomly~\cite{wei2019mmgcn}, multimodal features inherently encode relevant item information. As a result, models incorporating multimodal features exhibit greater information gain compared to those relying solely on ID embeddings.
This study investigates whether recommendation systems that learn item representations exclusively from multimodal features can outperform their ID-inclusive counterparts.

While several studies have explored non-ID recommendation~\cite{hou2022towards,hou2023learning,zhang2024id,qu2024elephant,zhang2024dual}, they primarily focus on sequential recommendation, which necessitates item interaction sequences for model training. Existing research relying solely on multimodal features~\cite{yuan2023go} has yielded inferior results compared to ID-based recommendation approaches. 
Consequently, to fully exploit the potential of multimodal features, we propose \mmnm{}, a model that \underline{L}earns \underline{I}tem \underline{R}epresentations \underline{D}irectly from multimodal features for \underline{Rec}ommendation.
Specifically, \mmnm{} employs a dual-pronged strategy to harness multimodal features: \textit{i)} enhancing uni-modal features by applying the 2-D Discrete Cosine Transform (DCT) and subsequently synthesize the interplay among heterogeneous feature modalities through a set of designed transformations, and \textit{ii)} discriminating modality-specific information via progressive weight copying (PWC).
Differentiating between relevant and irrelevant information within multimodal features is crucial, as these features, often extracted from pre-trained encoders, can contain noise or task-irrelevant components~\cite{liu2023semantic}.
To address this, the PWC module partitions the intermediate representation into segments and employs neural networks to assign weights to each segment. These weights are exponentially updated during training, facilitating a gradual and smooth adjustment of the model's focus on different feature aspects.
Our contributions can be summarized as follows:
\begin{itemize} 
\itemsep0em 
\item We conduct a comprehensive empirical and theoretical analysis of the training loss associated with item ID embeddings and multimodal features in recommendation model optimization. Our analysis reveals a pronounced optimization gradient bias in favor of learning item representations from multimodal features.
\item To exploit this bias, we propose \mmnm{}, a model exclusively focused on learning item representations from multimodal features for enhanced recommendation performance. \mmnm{} is designed to extract informative features from multimodal data and effectively discriminate between them using progressive weight copying.
\item We evaluate the proposed model on five public datasets, including a large-scale benchmark. Experimental results validate the efficacy of \mmnm{} in recommendation. Moreover, through comprehensive testing in various scenarios, \mmnm{} demonstrates rapid learning capabilities in directly acquiring item representations from multimodal features.
\end{itemize}

\begin{figure*}[ht]
  \centering
  \includegraphics[width=0.93\linewidth, trim={10 15 15 20},clip]{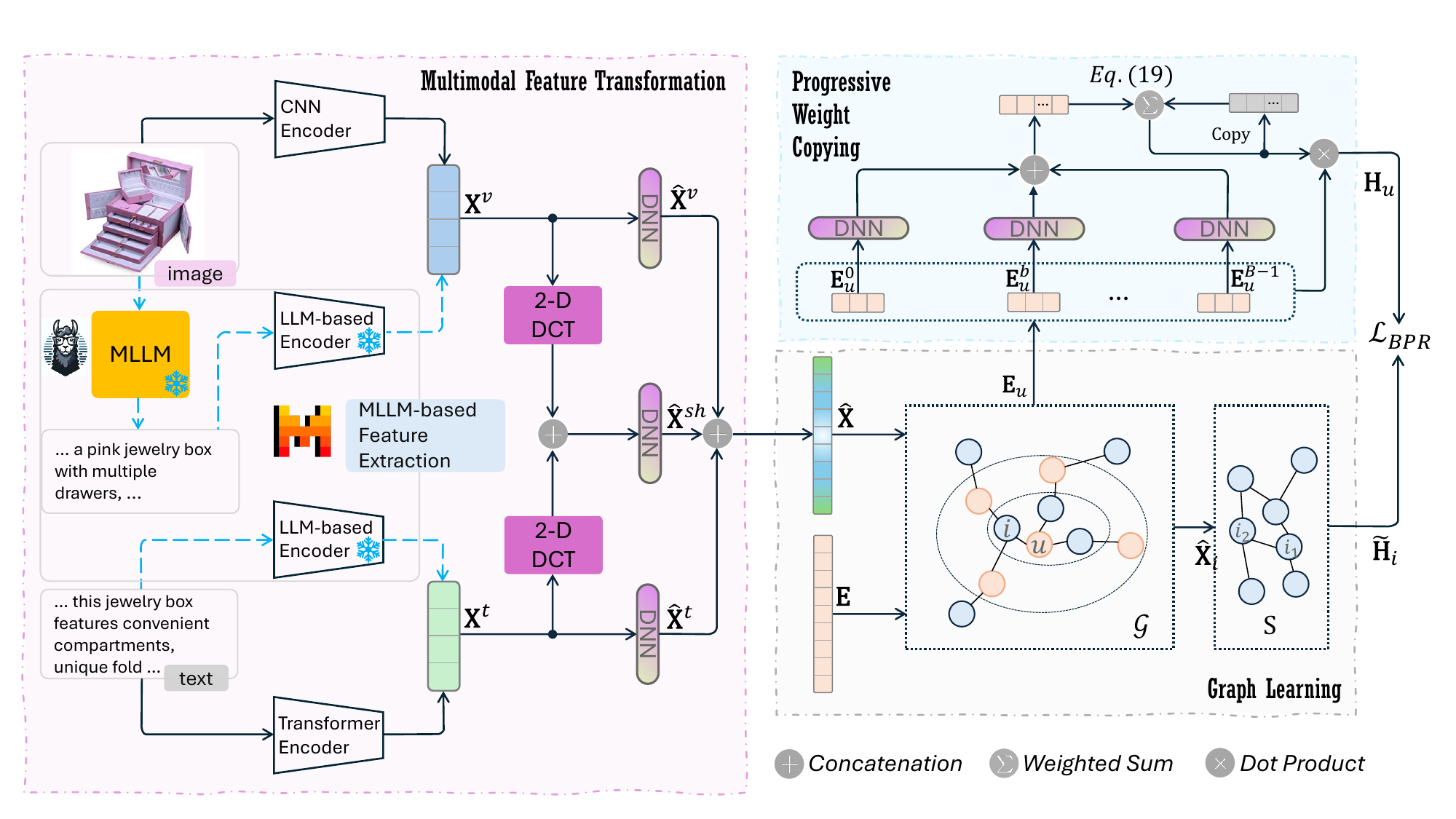}
  \caption[]{An overview of \mmnm{}. Instead of utilizing ID embeddings to represent items, \mmnm{} directly derives item representations from multimodal features.}
  \vspace{-16pt}
  \label{fig:model}
\end{figure*}

\section{The Preference for Multimodal Features in Recommendation: A Theoretical Analysis}
\label{sec:theo}
\subsection{Notations}
Consider a dataset represented as $\mathcal{D}=\{\mathcal{G}, \mathbf{X}^0, \ldots, \mathbf{X}^{|\mathcal{M}|-1}\}$, where $\mathcal{G} = (\mathcal{V},\mathcal{E})$ denotes the interaction bipartite graph. The sets $\mathcal{E}$ and $\mathcal{V}$ correspond to the edges and nodes of users and items, respectively. An edge $\mathcal{E}_{ui}=1$ within the graph $\mathcal{G}$ indicates an interaction between a user $u$ and an item $i$. The node set $\mathcal{V}=\mathcal{U} \cup \mathcal{I}$ encompasses all users and items, with $u\in \mathcal{U}$ and $i\in \mathcal{I}$ representing the user and item, respectively. $\mathbf{X}^m$ signifies the $m$-th modality feature matrix with dimension $\mathbb{R}^{|\mathcal{I}| \times d_m}$ for items, and $\mathcal{M}$ constitutes the set of modalities.
Our investigation primarily concentrates on two modalities, namely vision and text. The raw modality information for item $i$ is denoted as $\mathbf{x}^v_i$ and $\mathbf{x}^t_i$, respectively.
Based on this dataset $\mathcal{D}$, the objective of a multimodal recommender system is to recommend a ranked list of items to each user based on their preference score, which is defined as: $\hat{y}_{ui} = f_{\Theta}(u, i, \mathbf{x}^v_i, \mathbf{x}^t_i)$. $\Theta$ are the trainable parameters of the model $f$.
To optimize the model, MMRSs typically employ a pairwise Bayesian Personalized Ranking (BPR) loss~\cite{rendle2009bpr}, assigning higher predictive scores to observed user-item interactions compared to unobserved ones. That is,
\begin{equation}
	\mathcal{L}_{BPR} = \frac{1}{|\mathcal{R}|} \sum_{(u,i,j)\in \mathcal{R}} -\ln \sigma(\hat{y}_{ui} - \hat{y}_{uj}),
	\label{eq:loss}
\end{equation}
where $\mathcal{R}$ is the set of training instances, and each triple $(u,i,j)$ satisfies $\mathcal{E}_{ui} = 1 ~\text{and}~ \mathcal{E}_{uj} = 0$. $\sigma(\cdot)$ is the sigmoid function.

Conventional MMRSs~\cite{VBPR2016AAAI,FREEDOM2023MM} typically learn item representations by combining item ID embeddings with multimodal features. The multimodal features are initially projected into the same embedding space as the ID embeddings using a multilayer perceptron (MLP) before being fused at either an early or late stage. We theoretically demonstrate a bias towards multimodal features in these MMRSs.
\subsection{Theoretical Analysis}
\begin{theorem}
(\textbf{Gradient Magnitude Disparity in Early Training}): Let $\nabla \theta_{ID}$ and $\nabla \theta_{MM}$ denote the gradients of the loss function with respect to item ID embeddings and multimodal features, respectively. During the initial training epoch, the following inequality holds:

$$|\nabla \theta_{ID}| < |\nabla \theta_{MM}|,$$
where $|\cdot|$ represents the Euclidean norm.
\label{th:gradient}
\end{theorem}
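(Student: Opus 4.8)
The plan is to make both gradients explicit via the chain rule and reduce their comparison to a single scalar factor governed by the norm of the raw multimodal features. I would adopt the conventional setup described just above the theorem: the item representation is $\mathbf{e}_i = \mathbf{e}_i^{ID} + g_{\theta_{MM}}(\mathbf{x}_i)$, where $\theta_{MM}$ parameterizes the MLP that projects the raw feature $\mathbf{x}_i$ into the embedding space, and the preference score is the inner product $\hat{y}_{ui} = \langle \mathbf{e}_u, \mathbf{e}_i \rangle$. Substituting into a per-triple BPR term $-\ln\sigma(\hat{y}_{ui}-\hat{y}_{uj})$ and differentiating, I would first isolate the common scalar $c_{uij} = \sigma(\hat{y}_{uj}-\hat{y}_{ui})$ arising from the logistic-loss derivative. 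The key observation is that $c_{uij}$ is identical for the ID branch and the MM branch, since it depends only on the score difference and not on which parameter we differentiate; it therefore cancels when we form the ratio of gradient norms.

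Second, I would carry out the two chain-rule computations side by side. For the ID branch, $\partial \hat{y}_{ui}/\partial \mathbf{e}_i^{ID} = \mathbf{e}_u$, so the ID gradient equals $c_{uij}$ times a vector assembled from the user embeddings. For the MM branch, taking $g$ to be, to leading order, a linear projection $\mathbf{W}\mathbf{x}_i$, the chain rule yields $\partial \hat{y}_{ui}/\partial \mathbf{W} = \mathbf{e}_u\,\mathbf{x}_i^{\top}$, an outer product whose Frobenius norm factorizes as $\|\mathbf{e}_u\|\,\|\mathbf{x}_i\|$. Hence, per triple, the MM gradient carries an extra multiplicative factor of $\|\mathbf{x}_i\|$ relative to the ID gradient, with the shared scalar $c_{uij}$ and the shared user factor $\mathbf{e}_u$ common to both. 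For a genuinely multi-layer $g$, the same conclusion follows by absorbing the first-layer Jacobian evaluated at initialization into $\mathbf{W}$.

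Third, I would close the inequality using the hypotheses attached to the \emph{initial} training epoch. At initialization, the ID embeddings are randomly initialized with small magnitude and encode no item signal, whereas the raw multimodal features $\mathbf{x}_i$ are pre-extracted, high-dimensional, and generically satisfy $\|\mathbf{x}_i\| > 1$ (indeed $\gg 1$). The extra factor $\|\mathbf{x}_i\|$ therefore exceeds unity and forces $|\nabla\theta_{MM}| > |\nabla\theta_{ID}|$. Summing over the triples of $\mathcal{R}$ preserves the inequality under the ``equivalent loss conditions'' assumption, because the per-triple scalars $c_{uij}$ match term by term across the two branches.

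The main obstacle is making the comparison rigorous across \emph{heterogeneous} parameter spaces, since $\theta_{ID}$ lives in an embedding vector space while $\theta_{MM}$ parameterizes a transformation; the cleanest remedy is to factor out the shared upstream gradient $\partial\mathcal{L}/\partial\mathbf{e}_i$ so that everything reduces to the feature-norm factor $\|\mathbf{x}_i\|$ against $1$, and then to justify $\|\mathbf{x}_i\| > 1$ together with the random-initialization assumption. A secondary difficulty is the nonlinearity of $g$, which I would handle by linearizing around the initialization via its first-order Jacobian, a step that is legitimate precisely because the claim is restricted to the opening epoch.
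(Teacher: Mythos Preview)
Your chain-rule scaffolding is sound, but the decisive mechanism you invoke is different from the paper's. You push the backward pass one layer further into $\mathbf{W}$ and extract the multiplicative factor $\|\mathbf{x}_i\|$ from the outer product $\mathbf{e}_u\,\mathbf{x}_i^{\top}$, then close the inequality by assuming $\|\mathbf{x}_i\|>1$ for pre-extracted features. The paper instead stops the chain rule at the intermediate representation $\mathbf{x}_i^{MM}$, argues that at initialization $\partial\mathcal{L}/\partial\mathbf{x}_i^{ID}\approx\partial\mathcal{L}/\partial\mathbf{x}_i^{MM}$ (because the two user halves $\mathbf{x}_u^{ID}$ and $\mathbf{p}_u$ in the VBPR concatenation are drawn from the same initializer), and then uses a \emph{parameter-sharing accumulation} argument: the projection $\mathbf{W}$ is shared by all items, so over one epoch it aggregates gradient from every edge in $\mathcal{E}$, whereas the ID embedding of item $i$ only aggregates from its $|\mathcal{N}_i|$ incident triples. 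The disparity is $|\mathcal{N}_i|\ll|\mathcal{E}|$, a combinatorial amplifier rather than the geometric one you use.

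The practical consequence is that your load-bearing hypothesis $\|\mathbf{x}_i\|>1$ is nowhere stated in the paper and is not robust: pre-extracted embeddings are often $\ell_2$-normalized, in which case your per-triple ratio collapses to $1$ and the argument stalls. The paper's sharing argument is immune to feature normalization. A second, smaller mismatch is your additive fusion $\mathbf{e}_i=\mathbf{e}_i^{ID}+g(\mathbf{x}_i)$ versus the paper's concatenation setup; this is harmless for the mechanics but means your ``shared user factor $\mathbf{e}_u$'' actually splits into two independently initialized halves, which is precisely what the paper exploits to equate the two per-triple gradients before invoking sharing. If you graft the accumulation step ($|\mathcal{E}|$ versus $|\mathcal{N}_i|$) onto your per-triple analysis, you recover the paper's conclusion without needing any assumption on $\|\mathbf{x}_i\|$.
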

\begin{proof}
    Taking VBPR~\cite{VBPR2016AAAI} as an example, the representations of a user and an item are defined as follows:
    \begin{equation}
	\mathbf{x}_u = \mathbf{x}^{ID}_u \oplus \mathbf{x}^{p}_u; \quad \mathbf{x}_i = \mathbf{x}^{ID}_i \oplus \mathrm{MLP}(\mathbf{x}^{v}_i \oplus \mathbf{x}^{t}_i), 
\end{equation}
where $\mathbf{x}^{ID}$ represents the user or item ID embedding, $\mathbf{x}^{p}_u$ denotes the user preference on multimodal features, and $\oplus$ symbolizes the concatenation of two vectors. The transformed representation from multimodal features is defined as $\mathbf{x}_i^{MM} = \mathbf{W}(\mathbf{x}^{v}_i \oplus \mathbf{x}^{t}_i)$, where $\mathbf{W}$ is a linear projection matrix. The preference score between user $u$ and item $i$ can be reformulated as:
\begin{equation}
\hat{y}_{ui} = \mathbf{x}_u^\top \mathbf{x}_i = (\mathbf{x}_u^{ID})^\top \mathbf{x}_i^{ID} + \mathbf{p}_u^\top \mathbf{x}_i^{MM}.
\end{equation}
Consequently, the BPR loss can be reformulated as:
\begin{equation}
\begin{aligned}
\mathcal{L}_{BPR} & = {\mathbb{E}} [-\ln \sigma\left(\hat{y}_{u i}-\hat{y}_{u j}\right)] \\
& = {\mathbb{E}} [-\ln \sigma\left((\mathbf{x}_u^{ID})^\top \mathbf{x}_i^{ID} + \mathbf{p}_u^\top \mathbf{x}_i^{MM} - \right.\\
& \quad \left. ((\mathbf{x}_u^{ID})^\top \mathbf{x}_j^{ID} + \mathbf{p}_u^\top \mathbf{x}_j^{MM})\right)]\\
&= {\mathbb{E}} [-\ln \sigma\left(z\right)],
\end{aligned}
\end{equation}
where $z=(\mathbf{x}_u^{ID})^\top \mathbf{x}_i^{ID} + \mathbf{p}_u^\top \mathbf{x}_i^{MM}-((\mathbf{x}_u^{ID})^\top \mathbf{x}_j^{ID} + \mathbf{p}_u^\top \mathbf{x}_j^{MM})$.
To isolate the impact of item representation learning, we compare the contributions of item ID and multimodal features.

The gradient magnitude with respect to $\mathbf{x}_i^{ID}$ is computed as follows:
\begin{equation}
\frac{\partial L_{BPR}}{\partial \mathbf{x}_i^{ID}}=(\sigma(z)-1) \cdot \mathbf{x}_u^{ID}.
\end{equation}
The gradient magnitude of $\mathbf{x}_i^{MM}$ can be derived as follows:
\begin{equation}
\frac{\partial L_{BPR}}{\partial \mathbf{x}_i^{MM}} =(\sigma(z)-1) \cdot 
 \mathbf{p}_u^\top.
\end{equation}
As both $\mathbf{x}_u^{ID}$ and $\mathbf{p}_u$ are initialized using the same method, they exhibit no significant differences in the initial stages of training. That is $\frac{\partial L_{BPR}}{\partial \mathbf{x}_i^{ID}} \approx \frac{\partial L_{BPR}}{\partial \mathbf{x}_i^{MM}}$. However, as $\mathbf{x}_i^{MM}$ is derived from $\mathbf{W}$, a matrix shared across all items, the gradient magnitude propagated to $\mathbf{W}$ during the transformation of $\mathbf{x}_i^{MM}$ in one epoch is $\nabla \theta_{MM} = |\mathcal{E}| \cdot \frac{\partial L_{BPR}}{\partial \mathbf{x}_i^{MM}}$.
The magnitude of the gradient with respect to ID embeddings during a single epoch can be expressed as $\nabla \theta_{ID} = |\mathcal{N}_i| \cdot \frac{\partial L_{BPR}}{\partial \mathbf{x}_i^{ID}}$, where $|\mathcal{N}_i|$ represents the degree of node $i$ in graph $\mathcal{G}$ (\ie the cardinality of its neighbor set). Given the inherent sparsity of graph $\mathcal{G}$, characterized by $|\mathcal{N}_i| \ll |\mathcal{G}|$, we can establish the following inequality: $|\nabla \theta_{ID}| < |\nabla \theta_{MM}|$.
\end{proof}

The aforementioned theorem elucidates why randomly initialized ID embeddings can potentially hinder learning progress in Fig.~\ref{fig:intro_loss}.

\begin{lemma}
Given the heightened informational relevance of multimodal features, models constructed upon such representations exhibit superior information gain.
\end{lemma}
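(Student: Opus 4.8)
The plan is to make the informal phrase ``information gain'' precise by reading it as the mutual information between a learned item representation and a target variable $Y$ that encodes the item-relevant preference signal implicit in the BPR supervision (equivalently, the reduction in uncertainty $H(Y) - H(Y \mid \cdot)$ about observed user--item interactions). Under this reading the lemma becomes the strict inequality $I(Y; \mathbf{x}_i^{MM}) > I(Y; \mathbf{x}_i^{ID})$: the representation seeded by multimodal features is more informative about the recommendation target than the one seeded by a randomly initialized ID embedding. I would state and prove this for the initialization / early-training regime, which is exactly the regime isolated by Theorem~\ref{th:gradient}, so that the two results reinforce one another.

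First I would pin down the baseline for ID embeddings. Since $\mathbf{x}_i^{ID}$ is drawn from a fixed prior (\eg a zero-mean Gaussian) independently of the raw item content $(\mathbf{x}_i^v, \mathbf{x}_i^t)$, and hence of any content-derived target $Y$, independence immediately gives $I(Y; \mathbf{x}_i^{ID}) = 0$ at initialization. This formalizes the intuition that a randomly initialized embedding carries no item-specific signal before any gradient update.

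Next I would lower-bound the multimodal side. The feature $\mathbf{x}_i^{MM} = \mathbf{W}(\mathbf{x}_i^v \oplus \mathbf{x}_i^t)$ is a deterministic transform of raw content produced by pre-trained encoders, so when $Y$ is content-derived the relevant information is governed by the Markov chain $Y \to (\mathbf{x}_i^v, \mathbf{x}_i^t) \to \mathbf{x}_i^{MM}$. I would invoke invariance of mutual information under injective maps---or, for a dimension-reducing $\mathbf{W}$, a genericity argument that a full-rank projection (which holds with probability one at random initialization) retains a strictly positive share of the content signal---to conclude $I(Y; \mathbf{x}_i^{MM}) > 0$. Combining this with the vanishing ID baseline yields the strict inequality and hence the asserted superior information gain.

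The main obstacle I anticipate is not the inequality chain but making the setup honest. The clean identity $I(Y; \mathbf{x}_i^{ID}) = 0$ holds only at initialization; once training begins, ID embeddings do absorb interaction information, so the lemma must be scoped to the early-training regime (or phrased as an information-gain \emph{rate}) to remain literally true. A secondary subtlety is choosing $Y$ so that ``content-derived'' is defensible and the Markov factorization is valid; I would take $Y$ to be the idealized item relevance underlying the BPR signal. Finally, to argue that the head-start is not immediately erased, I would appeal to Theorem~\ref{th:gradient}: because the shared matrix $\mathbf{W}$ receives a larger aggregate gradient than any single ID embedding, the multimodal branch can convert its initial positive $I(Y;\cdot)$ into the representation faster than the ID branch can close the gap, so the information-gain advantage persists through the early epochs rather than vanishing after the first step.
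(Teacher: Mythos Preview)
Your approach is a genuinely different formalization from the paper's. You read ``information gain'' as the mutual information $I(Y;\cdot)$ between a content-derived target $Y$ and the item representation, then argue $I(Y;\mathbf{x}_i^{ID})=0$ at random initialization while $I(Y;\mathbf{x}_i^{MM})>0$ because the multimodal branch is a deterministic (and generically rank-preserving) transform of informative pre-trained features. The paper instead works at the \emph{parameter} level: it defines information gain as the KL divergence $KL(P(\Theta\mid\mathcal{D})\,\|\,P(\Theta))$ between posterior and prior over model parameters, applies Bayes' rule to rewrite this as $\mathbb{E}[\log P(\mathcal{D}\mid\Theta)] - \log P(\mathcal{D})$, and then concludes by asserting that the randomly initialized ID branch yields a lower expected log-likelihood than the multimodal branch. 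Your route buys a cleaner zero baseline (independence of a random ID embedding from any content-derived $Y$ is immediate) and makes the scoping to early training explicit, which the paper leaves implicit; the paper's route avoids having to manufacture a target variable $Y$ and a Markov factorization, and ties ``information gain'' directly to the likelihood that the BPR objective is actually optimizing. Both arguments are ultimately heuristic in their final step---the paper simply asserts the log-likelihood ordering, while you rely on a genericity claim about random projections preserving positive mutual information---so neither is fully rigorous, but your proposal is a coherent alternative rather than a gap.
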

\begin{proof}
First, recall that the Kullback-Leibler (KL) divergence between the posterior $P(\Theta|\mathcal{D})$ and the prior $P(\Theta)$ is defined as: 
\begin{equation}
KL(P(\Theta \mid \mathcal{D}) \parallel P(\Theta)) =  \int P(\Theta \mid \mathcal{D}) \ln\left(\frac{P(\Theta \mid \mathcal{D})}{P(\Theta)}\right) d\Theta.
\label{eq:kl}
\end{equation}
Applying Bayes' Theorem to Eq.~\eqref{eq:kl}, we have:
\begin{equation}
\begin{aligned}
KL(P(\Theta \mid \mathcal{D}) \parallel P(\Theta)) &= \\ \int P(\Theta \mid \mathcal{D}) \cdot & \ln \left(\frac{P(\mathcal{D} \mid \Theta) P(\Theta) / P(\mathcal{D})} {P(\Theta)}\right) d\Theta.
\end{aligned}
\label{eq:klal}
\end{equation}
The above equation~\eqref{eq:klal} can be simplified as:
\begin{equation}
\begin{aligned}
KL(P(\Theta \mid \mathcal{D}) \parallel P(\Theta)) & = \\ \int P(\Theta \mid \mathcal{D}) \cdot
& \left[\ln(P(\mathcal{D} \mid \Theta)) - \ln(P(\mathcal{D}))\right] d\Theta.
\label{eq:klsim}
\end{aligned}
\end{equation}
Recall that the expectation of the log-likelihood function with respect to the posterior distribution of the parameters can be represented as:
$
\mathbb{E}[\ln P(\mathcal{D} \mid \Theta)] = \int P(\Theta \mid \mathcal{D}) \log P(\mathcal{D} \mid \Theta)  d\Theta
$. We can rewrite Eq.~\eqref{eq:klsim} as:
\begin{equation}
KL(P(\Theta \mid \mathcal{D}) \parallel P(\Theta)) = \mathbb{E}[\log P(\mathcal{D} \mid \Theta)] - \log P(\mathcal{D}).
\end{equation}
KL divergence quantifies the information gleaned about parameter $\Theta$ from observed data $\mathcal{D}$, effectively measuring the divergence between posterior and prior distributions. Given the constancy of $\log P(\mathcal{D})$ for a fixed dataset, maximizing the expected log-likelihood is equivalent to maximizing KL divergence (information gain) up to an additive constant. In contrast to models directly mapping multimodal features to item representations, randomly initialized item ID embeddings introduce minimal supplementary information about items, resulting in a lower expected log-likelihood, $\mathbb{E}[\log P(\mathcal{D} \mid \Theta)]$. 
\end{proof}

\section{The Proposed Method}
\label{sec:model}
In our framework, we harness multimodal information in a dual manner: \textit{i)} by fabricating a latent item-item graph; \textit{ii)} by deriving a low-dimensional representation for items. 
We adopt the specifics of constructing the item-item graph, as we adhere strictly to the methodology delineated in~\cite{FREEDOM2023MM} for its construction, denoted as $\mathbf{S}$. 

\subsection{Multimodal Feature Transformation (MFT)}
\textit{\textbf{Modality-specific Feature Projection}.}
We utilize a Deep Neural Network (DNN) to harness the modality-specific relations by projecting each uni-modal feature into a shared low-dimensional space. Specifically, given a uni-modal feature matrix of items, denoted as $\mathbf{X}^m$, we derive the latent uni-modal representation as follows:
\begin{equation}
    \hat{\mathbf{X}}^m = \phi({\mathbf{X}^m \mathbf{W}_1^m  + \mathbf{b}_1^m}) \mathbf{W}^m_2,
    \label{eq:mm}
\end{equation}
where `$\phi(\cdot)$' is the Leaky ReLU activation function~\cite{maas2013rectifier}, and $\mathbf{W}_1^m \in \mathbb{R}^{d_m \times d_1}$, $\mathbf{W}^m_2 \in \mathbb{R}^{d_1 \times d}$, $\mathbf{b}_1^m \in  \mathbb{R}^{d_1}$ represent trainable weight matrices and bias vector, respectively. Here, $d_1$ and $d$ denote vector dimensions.

\textit{\textbf{2-D Discrete Cosine Transform (DCT)}.}
To exploit the comprehensive contextual information within individual modalities and their cross-modal relationships, we apply DCT~\cite{ahmed1974discrete} to decorrelate modality-specific features. These decorrelated features are then concatenated across all modalities to generate a unified latent representation of the item.

The DCT divides a signal into a sum of cosine waves at different frequencies, and has been extensively applied in deep learning-based image processing, demonstrating enhanced reasoning capabilities~\cite{gueguen2018faster, li2023discrete, su2024dctvit}. Its effectiveness mainly stems from two key properties: energy compaction, which concentrates signal energy into a small number of coefficients, and decorrelation, which ensures statistical independence between coefficients~\cite{scribano2023dct}.
Given a two-dimensional ($|\mathcal{I}| \times d_m$) modality-specific feature $\mathbf{X}^m$, we adopt the type-II DCT as:  
\begin{equation}
\small
T(\mathbf{X}^m)=\alpha_\mu \alpha_v \sum_{i=0}^{|\mathcal{I}|-1} \sum_{j=0}^{d_m-1} f^m_{(i, j)} \cos \frac{\pi(2 i+1) \mu}{2 |\mathcal{I}|} \cos \frac{\pi(2 j+1) v}{2 d_m},
\end{equation}
where $\alpha_x=\left\{\begin{array}{ll}
\sqrt{1 / N}, & x=0 \\
\sqrt{2 / N}, & 1 \leq x \leq N-1
\end{array}\right.$, $N=|\mathcal{I}|-1$ when x=$\mu$, and $N=d_m$ when x=$v$. It is noteworthy that the DCT is applied as a pre-processing step prior to model training. Consequently, the computational overhead associated with the DCT does not impact the training process.

Having transformed each modality's features, we concatenate them into a single feature vector. A subsequent DNN layer is responsible for extracting and exploiting the complex cross-modal relationships embedded within this concatenated representation. The resultant shared representation is denoted as:
\begin{equation}
    \hat{\mathbf{X}}^{sh} = \phi\Bigl( \bigl(T(\mathbf{X}^{0})  \oplus  \cdots  \oplus T(\mathbf{X}^{|\mathcal{M}|-1}) \bigr) \mathbf{W}_1^{sh}  + \mathbf{b}_1^{sh} \Bigr) \mathbf{W}^{sh}_2.
    \label{eq:sh}
\end{equation}
The dimensions of matrices $\mathbf{W}_1^{sh}$, $\mathbf{W}^{sh}_2$ and $\mathbf{b}_1^{sh}$ are $\mathbb{R}^{(\sum_{m=0}^{|\mathcal{M}|} d_m) \times d_1}$,  $\mathbb{R}^{d_1 \times d}$ and $\mathbb{R}^{d_1}$, respectively.

\textit{\textbf{Latent Multimodal Feature as Latent Item  Representation}.}
With the aforementioned transformations, the latent multimodal feature space encompasses  $|\mathcal{M}| + 1$ latent modalities of features, including  $\hat{\mathbf{X}}^{sh}$ and the modal-specific features. These latent representations are subsequently propagated through Graph Convolutional Networks (GCNs) to enhance the high-order user/item representations.

In contrast to conventional multimodal models~\cite{FREEDOM2023MM, zhang2023latent, zhou2024disentangled, guo2024lgmrec} that rely on initialized item ID embeddings, we construct item representations by concatenating the $|\mathcal{M}| + 1$ latent features into a unified feature vector:
\begin{equation}
    \widetilde{\mathbf{X}} = \hat{\mathbf{X}}^{0}  \oplus \cdots  \oplus \hat{\mathbf{X}}^{|\mathcal{M}|-1}  \oplus \hat{\mathbf{X}}^{sh}.
    \label{eq:itemrep}
\end{equation}
The dimension of $\widetilde{\mathbf{X}}$ is $\mathbb{R}^{|\mathcal{I}| \times \ell}$, where $\ell =|\mathcal{M}|d+d$.

\subsection{Graph Learning}
To accommodate user preference and attend to both the modality-specific latent feature and the multi-modality shared feature, we formulate a user ID embedding matrix, represented as $\mathbf{E} \in \mathbb{R}^{|\mathcal{U}| \times \ell}$.
For the propagation of information within the convolutional network, we employ LightGCN~\cite{he2020lightgcn}. 
We will simplify the presentation of graph learning in this section, as these concepts are widely adopted in existing multimodal models~\cite{LATTICE2021MM, FREEDOM2023MM}.
Specifically, the representations of user $u$ and item $i$ at the $(l+1)$-th graph convolution layer of $\mathcal{G}$ are derived as follows:
\begin{equation}
	\begin{split}
        \mathbf{E}_{u}^{(l+1)} &= \sum_{i\in \mathcal{N}_{u} } \frac{1}{\sqrt{\left | \mathcal{N}_{u}  \right | }\sqrt{\left | \mathcal{N}_{i}  \right | } } \widetilde{\mathbf{X}}_{i}^{(l)};\\ 
        \widetilde{\mathbf{X}}_{i}^{(l+1)} &= \sum_{u\in \mathcal{N}_{i} } \frac{1}{\sqrt{\left | \mathcal{N}_{u}  \right | }\sqrt{\left | \mathcal{N}_{i}  \right | } } \mathbf{E}_{u}^{(l)},
        \end{split}
\end{equation}
where $\mathcal{N}_u$ and $\mathcal{N}_i$ denote the set of first hop neighbors of $u$ and $i$ in $\mathcal{G}$, respectively.
Employing $L_{ui}$ layers of convolutional operations, we extract all representations from the hidden layers to formulate the final representations for both the user and the item:
\begin{equation}
	\begin{split}
		\mathbf{E}_u &= \mathrm{R{\scriptsize EADOUT}}(\mathbf{E}_u^0, \mathbf{E}_u^1, \cdots ,\mathbf{E}_u^{L_{ui}}); \\
		\widetilde{\mathbf{X}}_i &= \mathrm{R{\scriptsize EADOUT}}(\widetilde{\mathbf{X}}_i^0, \widetilde{\mathbf{X}}_i^1, \cdots, \widetilde{\mathbf{X}}_i^{L_{ui}}),
		\label{eq:lgn_layer_update}
	\end{split}
\end{equation}
where the $\mathrm{R{\scriptsize EADOUT}}$ function can be any differentiable function. We use the sum function to derive the representations.

Finally, we execute graph convolutions on the items with the item-item graph $\mathbf{S}$, to derive the ultimate item representation. The detailed procedures involved in this process are not elaborated upon here, as they are analogous to those used in the user-item graph~\cite{zhang2022latent, FREEDOM2023MM}.
With the last layer's representation $\widetilde{\mathbf{X}}_{i}^{L_{ii}}$ ($L_{ii}$ is the number of layers), we establish a residual connection with the initial item representation ($\widetilde{\mathbf{X}}_{i}^{0}$) to obtain the item final representation:
\begin{equation}
	\widetilde{\mathbf{H}}_i = \widetilde{\mathbf{X}}_{i}^{L_{ii}} + \widetilde{\mathbf{X}}_{i}^{0}.
	\label{eq:ii_emb}
\end{equation}
Performing convolutional operations on item multimodal feature offers two key benefits: \textit{i)} Users can immediately obtain informative representations by attending to multimodal information; 
\textit{ii)} It allows for the retention of both modality-specific and multimodal shared features, which are crucial for late fusion and recommendation tasks.

\subsection{Progressive Weight Copying (PWC)}
Given a user representation $\mathbf{E}_u$, we first separate it into $B$ chunks, where $B$ is typically set to $|\mathcal{M}| + 1$. For each chunk $\mathbf{E}_u^b$ ($b \in [B]$), we feed it into a DNN to obtain the weight value:
\begin{equation}
    {a}_b = \phi({\mathbf{E}_u^b \mathbf{W}_1^b  + \mathbf{b}_1^b}) \mathbf{W}^b_2 + \mathbf{b}_2^b,
    \label{eq:PWC}
\end{equation}
where $\mathbf{W}_1^b \in \mathbb{R}^{({\ell}//{B}) \times d_h}$, $\mathbf{W}^b_2 \in \mathbb{R}^{d_h \times 1}$, $\mathbf{b}_1^b \in  \mathbb{R}^{d_h}$ and $\mathbf{b}_2^b \in  \mathbb{R}^1$ are trainable weight matrices and bias vectors. $d_h$ is latent dimension.

To preserve the values produced by DNNs, we define a target network with more tightened and compact parameters.
The target network, parameterized as $\mathbf{\theta}$, contributes to the final weight value in an exponential fashion. Specifically, we define a decay base $\gamma$ and an initial decay rate $\tau$, both of which reside within the interval $(0, 1)$. 
Upon the completion of the $n$-{th} training iteration, we perform a parameter update according to the decay function specified in equation~\eqref{eq:decay}.
This strategy ensures a dynamic contribution of the target network to the final weight value, thereby enhancing the efficiency of the learning process.
\begin{equation}
\begin{aligned}
    &\eta = \gamma^n; \\
    &a_b = \frac{\tau \cdot \eta}{\tau \cdot \eta + 1 - \tau} \cdot a_b + \frac{1 - \tau}{\tau \cdot \eta + 1 - \tau} \cdot \theta_b;\\
    &\tau = \tau \cdot \eta;\quad \theta_b = a_b.
    \label{eq:decay}
\end{aligned}
\end{equation}
Utilizing the weights values $[a_b]$ derived from each training iteration, we assign weights to signify the importance of each chunk $\mathbf{E}_u^b$. Subsequently, we concatenate all chunks to form comprehensive user representations $\mathbf{H}_u$. This approach ensures a dynamic and efficient representation of users, taking into account the varying importance of different information chunks.
\begin{equation}
    \mathbf{H}_u = a_0  \mathbf{E}_u^0 \oplus   a_1  \mathbf{E}_u^1 \oplus \cdots \oplus a_{B-1} \mathbf{E}_u^{B-1}.
    \label{eq:attfl}
\end{equation}

\textit{\textbf{Relationship between PWC and Attention~\cite{vaswani2017attention}}.}
We systematically analyze the key differences and similarities between PWC and traditional attention mechanisms:
\emph{i}). \textbf{Difference:} While attention constructs Query/Key/Value with three matrices, PWC learns the weights of each block using a DNN.
\emph{ii}). \textbf{Difference:} PWC does not require dot product calculations to determine importance between elements, unlike attention's operation.
\emph{iii}). Similarity: Both PWC and attention use softmax for scaling scores.
\emph{iv}). \textbf{Difference:} Uniquely, PWC copies the calculated weights to a lightweight network progressively.

\subsection{Model Optimization and Top-K Recommendation}
For model optimization, we adopt the pairwise Bayesian personalized ranking (BPR) loss~\cite{rendle2009bpr}, that is:
\begin{equation}
	\mathcal{L}_{bpr} = \sum_{(u,i,j)\in \mathcal{R}} \left(-\mathrm{log} \sigma(\mathbf{H}_u^\top \widetilde{\mathbf{H}}_i - \mathbf{H}_u^\top \widetilde{\mathbf{H}}_j)\right).
	\label{eq:loss}
\end{equation}

Additionally, we add regularization penalty on the learned user/item representations.
Our final loss function is:
\begin{equation}
	\mathcal{L} = \mathcal{L}_{bpr} + \lambda \cdot (||\mathbf{H}_u||^2_2 + ||\widetilde{\mathbf{H}}_i||^2_2).
	\label{eq:final_loss}
\end{equation}

To generate item recommendations for a user, we first predict the interaction scores between the user and candidate items. Then, we rank candidate items based on the predicted interaction scores in a descending order, and choose $K$ top-ranked items as recommendations to the user. 
The score for an interaction between user $u$ and item $i$ is calculated as:
\begin{equation}
	r(\mathbf{h}_u, \mathbf{h}_i) = \mathbf{H}_u^\top \widetilde{\mathbf{H}}_i.
\end{equation}
A high score suggests that the user prefers the item.

\subsection{\mmnm{} Algorithm \& Computational Complexity Analysis}

\textbf{PyTorch-style Pseudo-code.}
The pseudo-code of \mmnm{} is in Algorithm 1 \& 2.

\begin{algorithm}
	\caption{PyTorch-style pseudo-code of \mmnm{}.}
	\label{influx}
	\begin{algorithmic}[1]
		\Require A batch of input user-item interactions $\mathbb{B}$, item multimodal feature set $\{\mathbf{X}^m\}$
		\Require $\{f_m\}, f_{sh}, g_{ui}, g_{ii}$ \Comment{$m$ modality-specific encoders, multimodal shared encoder, GCN encoders}
		\For{$m ~ in ~ \mathcal{M}$}\Comment{uni-modal data} 
		\State $\hat{\mathbf{X}}^m = f_m(\mathbf{X}^m)$ \Comment{Eq. (11)} 
        \EndFor
        \State $\hat{\mathbf{X}}^{sh} = f_{sh}\bigl(T(\mathbf{X}^0) \oplus \cdots \bigr)$ \Comment{Eq. (13)} 
        \State $\widetilde{\mathbf{X}} = \hat{\mathbf{X}}^{0}  \oplus \cdots \hat{\mathbf{X}}^{m} \cdots \oplus \hat{\mathbf{X}}^{sh}$ \Comment{Latent item representation}
		\State $({\mathbf{E}}_u, \widetilde{\mathbf{X}}_i) = g_{ui}(\mathbf{E}, \widetilde{\mathbf{X}})$ \Comment{output of GCN, Eq. (15)-(16)} 
        \State $\widetilde{\mathbf{H}}_i = g_{ii}(\widetilde{\mathbf{X}}_i)$ \Comment{GCN on item-item graph, Eq. (17)}
	\State $\mathbf{H}_u = \text{PWC}(\mathbf{E}_u)$ \Comment{PWC in Algorithm 2, Eq. (18)-(20)} 
        \State $\mathcal{L}_{bpr} = \left(-\mathrm{log} \sigma(\mathbf{H}_u^\top \widetilde{\mathbf{H}}_i - \mathbf{H}_u^\top \widetilde{\mathbf{H}}_j)\right)$ \Comment{Eq. (21)} 
        \State $\mathcal{L} = \mathcal{L}_{bpr} + \lambda \cdot (||\mathbf{H}_u||^2_2 + ||\widetilde{\mathbf{H}}_i||^2_2)$ \Comment{Eq. (22)} 
		\State $\mathcal{L}$.backward() \Comment{back-propagate}
	\end{algorithmic}
\end{algorithm}

\begin{algorithm}
	\caption{Pseudo-code of PWC.}
	\label{influxx}
	\begin{algorithmic}[1]
		\Require $B, \mathbf{E}_u$ \Comment{\# of chunks, latent user embedding}
        \Require $\{f_b\}, \theta, \enspace \text{and parameters:}\tau, \eta, n $ \Comment{A set of chunk encoders, tightened target network}
        \Function {PWC:}{}
  		\For{$b ~ in ~ [B]$}\Comment{Each chunk} 
		\State $a_b = f_b(\mathbf{E}^b_u)$ \Comment{weight for unit, Eq. (17)} 
        \State $\eta = \gamma^n$ \Comment{Eq. (18) for lines 4-7.} 
        \State $a_b = \frac{\tau \cdot \eta}{\tau \cdot \eta + 1 - \tau} \cdot a_b + \frac{1 - \tau}{\tau \cdot \eta + 1 - \tau} \cdot \theta_b$
        \State $\tau = \tau \cdot \eta$
        \State $\theta_b = a_b$
        \EndFor
        \State return $\mathbf{H}_u = a_0  \mathbf{E}_u^0 \oplus \cdots  \oplus a_{B-1} \mathbf{E}_u^{B-1})$ \Comment{Eq. (19)}
        \EndFunction 
	\end{algorithmic}
\end{algorithm}

\textbf{Computational Complexity Analysis.}
To ensure consistency, we adopt the notation established in Section \ref{sec:theo}.
The computational cost of \mmnm{} can be decomposed into two components: \textit{the multimodal transformation} and \textit{the graph convolutional operators}.

i) \textit{The multimodal transformation}: 
The projection complexity for multimodal features is $\mathcal{O}(\sum_{m \in \mathcal{M}} |\mathcal{I}|(d_m d_1 + d_1 d))$ on all modalities. 
In addition, \mmnm{} also perform projection on shared multimodal features pre-transformed by 2-D DCT, which has complexity of $\mathcal{O}\left(|\mathcal{I}| (d_1 d + \sum_{m \in \mathcal{M}} d_m d_1) \right)$.

ii) \textit{The graph convolutional operators}: 
While the computational complexity of GCNs applied to user-item and item-item graphs primarily depends on the number of edges, there are slight differences due to the graph structure.

For a user-item graph represented as $\mathcal{G} = (\mathcal{V},\mathcal{E})$, the analytical complexity is $\mathcal{O}(2L_{ui} |\mathcal{E}| d)$, where $L_{ui}$ denotes the number of LightGCN layers, $d$ represents the feature dimension.
The analytical complexity for the item-item graph, denoted by $\mathcal{G}_I = (\mathcal{V}_I,\mathcal{E}_I)$, is $\mathcal{O}(L_{ii} |\mathcal{E}_I| d)$.

To summarize, the overall computational complexity of \mmnm{} is:
$\mathcal{O}\Bigl(2L_{ui} |\mathcal{E}| d + L_{ii} |\mathcal{E}_I| d + |\mathcal{I}| \bigl((|\mathcal{M}|+1) d_1 d + 2\sum_{m \in \mathcal{M}} d_m d_1\bigr) \Bigl)$.

\section{Experiments}
\label{sec:experiments}
To empirically assess the efficacy of our proposed recommendation paradigm, \textit{\mmnm{}}, we conduct a comprehensive experimental evaluation designed to address the following research questions using three real-world datasets.\\
\textbf{RQ1}: To what extent does our proposed method, \mmnm{}, outperform current state-of-the-art recommendation models when assessed using established evaluation metrics and benchmark datasets?\\
{\textbf{RQ2}}: Given that our \mmnm{} framework learns item representations directly from multimodal features, we pose the question: To what degree can our proposed methodology alleviate the issues related to sparse data and demonstrate resilient performance under cold-start scenarios, particularly when there is no interaction information available for specific items?\\
{\textbf{RQ3}}: What are the relative contributions of the individual components comprising our proposed \mmnm{} framework in determining its recommendation performance?

\subsection{Datasets \& Evaluation Protocols}
\label{sec:exp_data}
\begin{table}[bpt]
	\centering	
	\caption{Statistics of the experimental datasets.}
	\begin{tabular}{l r r r r}
		\toprule
		Dataset & \# Users & \# Items & \# Interactions & Sparsity \\
		\midrule
		Baby & 19,445 & 7,050 & 160,792 & 99.88\% \\
		Sports & 35,598 & 18,357 & 296,337 & 99.95\%\\
		Clothing & 39,387 & 23,033 & 278,677 & 99.97\%\\
		Electronics & 192,403 & 63,001 & 1,689,188 & 99.99\%\\
        MicroLens & 98,129 & 17,228 & 705,174 & 99.96\%\\
		\bottomrule
	\end{tabular}
	\vspace{-12pt}
	\label{tab:datasets}
\end{table}

\textit{\textbf{Datasets}.}
Following the experimental setups adopted in prior related works~\cite{LATTICE2021MM, FREEDOM2023MM}, we conduct a comprehensive empirical evaluation of our proposed model and all baselines on five publicly-available benchmark datasets from the Amazon review~\cite{he2016ups} and a short-video platform~\cite{ni2023content}.
Specifically, we use the following categories of Amazon reviews: \textit{i)} \emph{Baby}, \textit{ii)} \emph{Sports and Outdoors}, \textit{iii)} \emph{Clothing, Shoes and Jewelry} and \textit{iv)} \emph{Electronics}. 
For clarity and space consideration, we refer to these three datasets as \emph{Baby}, \emph{Sports},  \emph{Clothing}, and \emph{Elec}, respectively.

\textit{\textbf{Pre-extracted Multimodal Features}.}
To incorporate multimodal item information into recommendation, we leverage the pre-extracted visual and textual features provided by FREEDOM~\cite{MMRec2023MMAsia, FREEDOM2023MM}. That is, we utilize the 4,096-dimensional visual embeddings and the 384-dimensional textual embeddings obtained through their feature extraction pipeline.

\textit{\textbf{MLLM-based Features Extraction}.}
Large Language Models (LLMs) and Multimodal Large Language Models (MLLMs) demonstrate sophisticated capabilities in encoding semantic information from text. Recent research has introduced various text embedding methodologies leveraging these models~\cite{wang2024improving, behnamghader2024llm2vec}. Building upon this foundation, our approach employs MLLMs for image-to-text conversion of item images, followed by the extraction of embeddings from both the generated descriptions and the original textual item descriptions using LLMs.
In our implementation, we leverage Meta's ``Llama-3.2-11B-Vision''~\cite{dubey2024llama} for converting visual content into textual descriptions. 
Subsequently, we employ the ``e5-mistral-7b-instruct'' model~\cite{wang2024improving} to generate text embeddings for the items. 
Through this pipeline, we generate distinct 4,096-dimensional embedding vectors for both the image-derived and text-based item descriptions.

\textit{\textbf{General Evaluation}.}
In this scenario, we adopt the data splitting strategy as delineated in~\cite{LATTICE2021MM, FREEDOM2023MM} for each individual user within the evaluated dataset. 
More specifically, we randomly split each user's historical interactions into three mutually exclusive subsets, with an 8:1:1 ratio for training, validation, and testing, respectively. Given the stipulation  that each user must have a minimum of five interactions within the processed dataset, this splitting approach ensures that at least one sample is available for both validation and testing, while guaranteeing that no fewer than three interactions are utilized for model training.

\textit{\textbf{Cold-Start Evaluation}.}
Following~\cite{LATTICE2021MM}, in cold-start setting, we first randomly select a subset comprising 20\% of the items from the entire item pool within the original dataset. We then divide this held-out set of items into two equal partitions of 10\% each, allocating one partition to serve as the validation set and the other as the test set.
Subsequently, we assign user-item interactions to the training, validation, or testing sets based on the specific item involved in each interaction. Consequently, the items in the validation and test sets are entirely unseen during the model training phase, accurately simulating the challenges posed by cold-start situations where no prior information is available for a subset of items during the recommendation process.

To ensure a fair and consistent comparison with prior works, we adopt the same rigorous evaluation settings employed in existing studies~\cite{LATTICE2021MM, FREEDOM2023MM}. Specifically, we compute the evaluation metrics using the all-ranking protocol and report the average performance of all test users under both $K=10$ and $K=20$ settings for metrics of Recall@$K$ and Normalized Discounted Cumulative Gain (NDCG@$K$), abbreviated as R@$K$ and N@$K$ respectively.

\begin{table*}[bpt]
	\centering	
	\def\arraystretch{0.86}	
        \setlength{\tabcolsep}{2.6pt}
	\caption{Performance comparison of different recommendation methods in terms of Recall and NDCG. `*' denotes that the improvements are statistically significant compared of the best baseline in a paired $t$-test with $p<0.01$. }
	\begin{tabular}{llcccccccccccc}
		\toprule
		\multirow{2}{*}{Dataset} & \multirow{2}{*}{Metric} & MF & LightGCN & VBPR & GRCN & LATTICE & SLMRec & MICRO & BM3 & FREEDOM & LGMRec & \textbf{\mmnm{}} & \textbf{LIRDRec$_\textrm{{MLLM}}$} \\
		\cmidrule(lr){3-4} \cmidrule(lr){5-12} \cmidrule(lr){13-14}
		& & UAI'09 & SIGIR'20 & AAAI'16 & MM'20 & MM'21 & TMM'22 & TKDE'23 & WWW'23 & MM'23 & AAAI'24 & (Ours) & (Ours) \\
		\midrule
            \multirow{4}{*}{Baby} & R@10 & 0.0357 & 0.0479 & 0.0423 & 0.0532 & 0.0540 & 0.0547 & 0.0570 & 0.0564 & {0.0627} & \underline{0.0644} & 0.0693* & \textbf{0.0707*} \\
		& R@20 & 0.0575 & 0.0754 & 0.0663 & 0.0824 & 0.0850 & 0.0810 & 0.0905 & 0.0883 & {0.0992}  & \underline{0.1002} & {0.1048*} & \textbf{0.1068*} \\
		& N@10 & 0.0192 & 0.0257 & 0.0223 & 0.0282 & 0.0292 & 0.0285 & 0.0310 & 0.0301 & {0.0330}  & \underline{0.0349}  & {0.0374*}  & \textbf{0.0382}\\
		& N@20 & 0.0249 & 0.0328 & 0.0284 & 0.0358 & 0.0370 & 0.0357 & 0.0406 & 0.0383 & {0.0424}  & \underline{0.0440} & {0.0465*}  & \textbf{0.0475} \\
		\midrule
		\multirow{4}{*}{Sports} & R@10 & 0.0432 & 0.0569 & 0.0558 & 0.0599 &  0.0620 & 0.0676 & 0.0675 & {0.0656} & {0.0717} & \underline{0.0720} & {0.0785*} & \textbf{0.0815*} \\
		& R@20 & 0.0653 & 0.0864 & 0.0856 & 0.0919 &  0.0953 & 0.1017 & 0.1026 & {0.0980} & \underline{0.1089}  & {0.1068}  & {0.1188*}  & \textbf{0.1216*} \\
		& N@10 & 0.0241 & 0.0311 & 0.0307 & 0.0330 &  0.0335 & 0.0374 & 0.0365 & {0.0355} & {0.0385} &  \underline{0.0390} & {0.0425*}  & \textbf{0.0446*} \\
		& N@20 & 0.0298 & 0.0387 & 0.0384 & 0.0413 &  0.0421 & 0.0462 & 0.0463 & {0.0438} & \underline{0.0481} & {0.0480}  & {0.0529*}  & \textbf{0.0549*} \\
		\midrule
		\multirow{4}{*}{Clothing} & R@10 & 0.0206 & 0.0361 & 0.0281 & 0.0421 &  0.0492 & 0.0540 & 0.0496 & 0.0422 & \underline{0.0629} & {0.0555} & {0.0671*} & \textbf{0.0735*} \\
		& R@20 & 0.0303 & 0.0544 & 0.0415 & 0.0657 &  0.0733 & 0.0810 & 0.0743 & 0.0621 & \underline{0.0941}  & {0.0828} & {0.0998*}  & \textbf{0.1061*} \\
		& N@10 & 0.0114 & 0.0197 & 0.0158 & 0.0224 &  0.0268 & 0.0285 & 0.0264 & 0.0231 & \underline{0.0341}  & {0.0302} & {0.0362*}  &  \textbf{0.0400*} \\
		& N@20 & 0.0138 & 0.0243 & 0.0192 & 0.0284 & 0.0330 & 0.0357 & 0.0332 & 0.0281 & \underline{0.0420}  & {0.0371} & {0.0445*}  & \textbf{0.0483*} \\
            \midrule
            \multirow{4}{*}{Elec} & R@10 & 0.0235 & {0.0363} & 0.0293 &  0.0349 & OOM & 0.0422 & OOM & \underline{0.0437} & 0.0396   & {0.0417}  & {0.0477*} & \textbf{0.0493*}\\
			& R@20 & 0.0367 & 0.0540 & 0.0458 & 0.0529 &  OOM & 0.0630 & OOM & \underline{0.0648} & 0.0601 &   {0.0625} & {0.0709*} & \textbf{0.0736*}\\
			& N@10 & 0.0127 & 0.0204 & 0.0159 & 0.0195 &  OOM & 0.0237 & OOM & \underline{0.0247} & 0.0220  &  {0.0233} & {0.0264*} & \textbf{0.0274*} \\
			& N@20 & 0.0161 & {0.0250} & 0.0202 & 0.0241 &  OOM  & 0.0291 & OOM & \underline{0.0302} & 0.0273  & {0.0287}  & {0.0324*} & \textbf{0.0337*}\\
            \midrule
            \multirow{4}{*}{Microlens} & R@10 & 0.0624 & {0.0720} & 0.0677 & 0.0702 & 0.0726 & \underline{0.0778} & 0.0706 & {0.0606} & 0.0674  & {0.0748}  & {0.0839*} & \textbf{0.0857*} \\
			& R@20 & 0.0959 & 0.1075 & 0.1026 &  0.1070 &  0.1089 & \underline{0.1190} & 0.1101 & {0.0981} & 0.1032 &  {0.1132} & {0.1249*} & \textbf{0.1268*} \\
			& N@10 & 0.0322 & 0.0376 & 0.0351 & 0.0365 &  0.0380 & \underline{0.0405} & 0.0360 & {0.0304} & 0.0345  &  {0.0390} & {0.0440*} & \textbf{0.0453*} \\
			& N@20 & 0.0408 & {0.0467} & 0.0441 & 0.0460 &  0.0473 & \underline{0.0511} & 0.0461 & {0.0400} & 0.0437  & {0.0489}  & {0.0545*}  & \textbf{0.0559*}\\
		\bottomrule	
        \multicolumn{14}{l}{`OOM' signifies an Out Of Memory condition for a Tesla V100 GPU equipped with 32GB of memory.}\\
	\end{tabular}
 \vspace{-10pt}
	\label{tab:perform}	
\end{table*}

\subsection{Baselines}
To rigorously validate the efficacy of our proposed \mmnm{}, we conduct an extensive comparative evaluation against state-of-the-art recommendation models spanning two distinct categories. The first category encompasses two widely-adopted collaborative filtering models that leverage solely user-item interaction data to generate personalized item recommendations for users. 
The second category of baselines includes state-of-the-art multimodal recommendation models that integrate user-item interaction data with auxiliary modalities capturing rich item characteristics.

\noindent\textbf{\textit{i) General models:}} 
\begin{itemize}[leftmargin=*]
\item \textbf{MF}~\cite{rendle2009bpr} learns low-dimensional latent representations for users and items with an objective to minimize the Bayesian Personalized Ranking (BPR) loss within the matrix factorization framework.
\item \textbf{LightGCN}~\cite{he2020lightgcn} is a tailored version of vanilla GCNs, specifically designed for recommendation. It achieves this by eliminating the non-linear activation and feature transformation components, which are typically present in conventional GCN architectures.
\end{itemize}
\textbf{\textit{ii) Multimodal models:}} 
\begin{itemize}[leftmargin=*]
\item \textbf{VBPR}~\cite{VBPR2016AAAI} builds upon the matrix factorization framework by incorporating visual features alongside item ID embeddings to learn enriched item representations. 
\item \textbf{GRCN}~\cite{wei2020graph} further purifies the user-item interactions by identifying and subsequently removing noisy relations. It then learns user and item representations by propagating and aggregating information on the denoised user-item graph using GCNs.
\item \textbf{LATTICE}~\cite{LATTICE2021MM} is designed to dynamically construct latent semantic structures between items for each available data modality. Subsequently, it integrates these modality-specific graphs and executes a process of information propagation and aggregation, utilizing both the user-item interactions and the constructed item-item relationships.
\item \textbf{SLMRec}~\cite{tao2022self} incorporates self-supervised learning to obtain powerful representations for users and items based on the relations underlying the data.
\item \textbf{MICRO}~\cite{zhang2023latent} employs a contrastive learning framework based on LATTICE to align item representations of different modalities with fused multimodal representations, thereby enhancing recommendation accuracy.
\item \textbf{BM3}~\cite{BM32023WWW} perturbs the latent vector representations of items in the model, enabling more effective personalized recommendation even in the absence of negative sampling. 
\item \textbf{FREEDOM}~\cite{FREEDOM2023MM} posits that the dynamic learning of the item-item graph, as implemented in LATTICE, is not requisite. Hence, it constructs the item-item graph from multimodal features and then freezes the constructed graph for effective graph learning.
\item \textbf{LGMRec}~\cite{guo2024lgmrec} initially learns user and item embeddings through local collaborative and modality-specific views, then integrates these embeddings with global representations derived from a hypergraph of users and items.
\end{itemize}

To ensure a fair comparison, we evaluate all baseline models and our proposed model within the same framework as in~\cite{FREEDOM2023MM}. We employ the Adam optimizer for hyperparameter optimization of each model, utilizing grid search to identify their optimal performance on each dataset.
For our proposed \mmnm{}, we follow prior work~\cite{LATTICE2021MM, FREEDOM2023MM} by setting the user preference embedding size to 64 and initializing the parameters with the Xavier method~\cite{glorot2010understanding}. We empirically set the number of GCN layers in the item-item graph ($L_{ii}$) to 1. The remaining hyperparameters of \mmnm{} (number of GCN layers in the user-item graph $L_{ui}$, decay rate $\tau$, decay base $\gamma$, and regularization coefficient $\lambda$) are optimized using grid search across all datasets.

\subsection{Performance Comparison}
\textit{\textbf{General Evaluation of \mmnm{} ({RQ1})}}.
The comparative analysis of recommendation accuracy, quantified in terms of Recall and NDCG, for our proposed model as well as the baseline models is delineated in Table~\ref{tab:perform}. 
In the table, we highlight the global best results on each dataset for each metric in \textbf{boldface} and the best performance for baselines in \underline{underlined}. 
From the table, we have the following observations:

\textit{i) The proposed recommendation paradigm, \mmnm{}, exhibits superior performance compared to all baseline models.} Specifically, \mmnm{} demonstrates a 7.11\% improvement in NDCG@20 performance over the best-performing baseline across all datasets evaluated. When trained with features extracted using MLLMs, \mmnm{}$_\text{MLLM}$ further enhances this improvement to 11.61\%.
Conventional recommendation methods leverage pre-defined item ID embeddings and employ various alignment techniques to enforce congruence between these embeddings and the learned multimodal features. This alignment enables the multimodal features to contribute, albeit to a limited extent, to the learning of item ID representations. Nevertheless, the loss function preferentially optimizes the learning of multimodal features, resulting in suboptimal item ID embeddings, as illustrated in Figure~\ref{fig:intro_loss}.

In contrast, the proposed \mmnm{} dispenses with item ID embeddings entirely, opting instead to directly derive item representations from the inherent information within the multimodal features. Furthermore, it employs a progressive weight copying to modulate the contribution of various uni-modal features, effectively differentiating their importance. This strategy yields item representations that retain rich modality-specific information crucial for recommendation, while simultaneously mitigating the inclusion of irrelevant details by assigning lower weights to less informative features.

\textit{ii) Recommendation models can potentially benefit from the multifaceted approach of incorporating multimodal features.}
Generally, both traditional models (\eg MF) and graph-based models (\eg LightGCN) can benefit from this enriched information space. However, models such as GRCN, FREEDOM, and \mmnm{} take a more nuanced approach, leveraging multimodal features on multiple levels to obtain better recommendation performance. Specifically, GRCN utilizes multimodal features to: \textit{a)} weight the importance of edges within the user-item interaction graph, and \textit{b)} fuse the features into the final user and item representations for recommendation. As a result, GRCN achieves significant improvements over MMGCN. Similarly, FREEDOM and \mmnm{} employ multimodal features to construct an item-item interaction graph and enable the propagation of latent item representations within this graph. These models further integrate multimodal features directly into the loss function, facilitating the effective learning of user and item representations during training. This multifaceted integration demonstrably enhances recommendation performance, as evidenced by the superior accuracy achieved by both FREEDOM and \mmnm{}.

Given that \mmnm{} directly learns item representations from multimodal features, the quality of these features significantly influences its performance. Leveraging the rich semantic information and expressive power of MLLMs and LLMs, features extracted from these models demonstrate superior recommendation performance compared to those obtained using conventional methods such as CNNs. Specifically, \mmnm{}$_\text{MLLM}$ achieves a 4.21\% improvement in NDCG@20 performance over the baseline \mmnm{} across all evaluated datasets.

\begin{figure}[bpt]
  \centering
  \includegraphics[width=\linewidth, trim={0 10 10 10},clip]{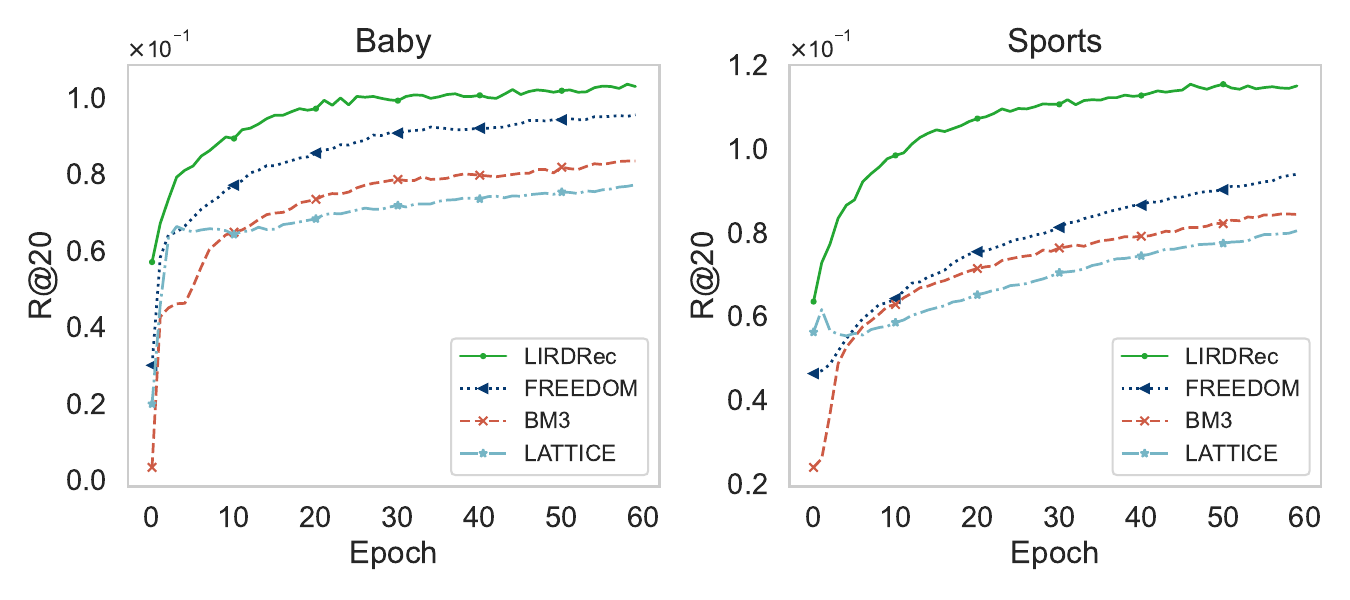}
  \vspace{-10pt}
  \caption[]{The proposed \mmnm{} can quickly boost startup recommendation performance for both datasets.}
  \label{fig:startup}
  \vspace{-12pt}
\end{figure}

\textit{\textbf{Quick Startup of \mmnm{}}}.
As detailed in Section~\ref{sec:model}, the benefits of directly learning from multimodal features are manifold. A key advantage is the capacity to rapidly acquire informative representations by attending to multimodal information.

In order to further investigate this, we scrutinize the training logs of four models: LATTICE, BM3, FREEDOM, and \mmnm{}. The performance of these models, specifically the Recall@20, is plotted over the first 60 training epochs as shown in Fig.~\ref{fig:startup}.
From the figure, we observe a dramatic improvement in recommendation accuracy for \mmnm{} after the initial training epoch (0-indexed). Notably, \mmnm{} exhibits a performance gain of between two and three times that of both FREEDOM and BM3.

\textit{\textbf{Cold-Start Evaluation of \mmnm{} ({RQ2})}}.
Multimodal recommendation models leverage additional information beyond user-item interactions, contributing to alleviate the data sparsity issue. 
Hence, we further evaluate a set of multimodal model under a cold-start scenario.
Fig.~\ref{fig:coldstart} presents a comparative analysis of recommendation performance for our proposed \mmnm{} and three representative models.
As general recommendation models (\ie MF and LightGCN) struggle to recommend relevant items for cold-start users due to their heavy reliance on initial user/item embedding values, their results are omitted from the figure for clarity.
The figure elucidates the subsequent facets:

\begin{figure}[bpt]
  \centering
  \includegraphics[width=0.88\linewidth]{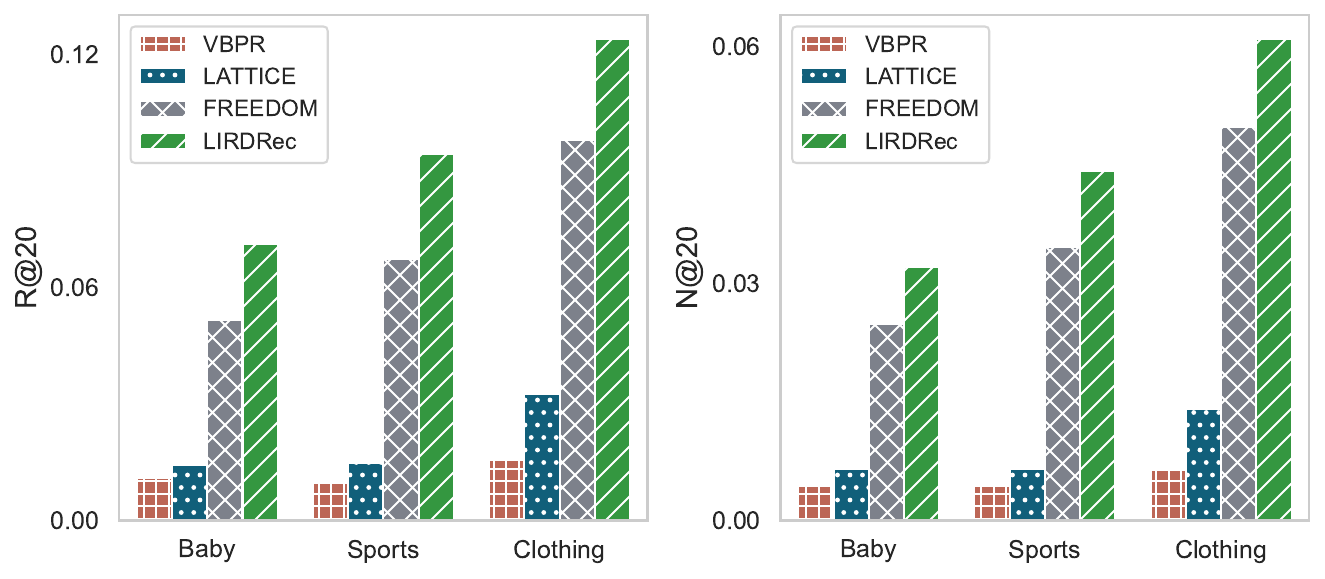}
  \vspace{-6pt}
  \caption[]{Performance of \mmnm{} compared with representative baselines under cold-start settings.}
  \vspace{-2pt}
  \label{fig:coldstart}
\end{figure}

\textit{Firstly}, incorporating multimodal features into the training loss function can enhance the robustness of recommendation models in cold-start scenarios. This is supported by the following two observations. \textit{i) VBPR:} This model concatenates multimodal features with item IDs for item representation learning. While LATTICE solely utilizes multimodal features to construct the item-item graph, VBPR achieves competitive performance on smaller datasets like ``Baby'' and ``Sports''. This suggests that directly incorporating multimodal features in the loss function might be beneficial.
\textit{ii) FREEDOM and \mmnm{}:} Both these models integrate multimodal features into item representation learning. Consequently, they significantly outperform LATTICE across all datasets, further supporting the effectiveness of this approach.
\textit{Secondly}, GCNs have the potential to propagate information and gradients to unseen items (cold-start items) during training, even if they have not been observed in user-item interactions. This can alleviate the cold-start problem, particularly when the user-item graph is large and sparsely connected. Partial validation for this can be observed on ``Clothing'' dataset in Fig.~\ref{fig:coldstart}.

\begin{table}[bpt]
	\centering	
	\def\arraystretch{0.85}	
        \setlength{\tabcolsep}{4.6pt}
	\caption{Performance comparison of \mmnm{} variants under different component ablation settings.}
	\vspace{-4pt}
	\begin{tabular}{llccc}
		\toprule
		{Dataset} & {Metric} & \mmnm{}\textsubscript{w/o PWC} & \mmnm{}\textsubscript{w/o DCT}& \textbf{\mmnm{}} \\
		\midrule
		\multirow{2}{*}{Baby} & R@20 & 0.0893 & 0.1019 & 0.1048\\
		& N@20 & 0.0403 & 0.0449 & 0.0465 \\
		\midrule
		\multirow{2}{*}{Sports} & R@20 & 0.0994 & 0.1137 & 0.1188 \\
		& N@20 & 0.0452 & 0.0516 & 0.0529 \\
		\midrule
		\multirow{2}{*}{Clothing} & R@20 & 0.0857 & 0.0959 & 0.0998 \\
		& N@20 & 0.0387 & 0.0433 & 0.0445 \\
		\bottomrule
	\end{tabular}
	\vspace{-12pt}
	\label{tab:perform_com}	
\end{table}

\begin{table}[bpt]
	\centering
	\def\arraystretch{0.85}	
	\caption{Performance comparison of \mmnm{} variants under different uni-modal features.}
	\begin{tabular}{llccc}
		\toprule
		{Dataset} & {Metric} & \mmnm{}\textsubscript{w/o V} & \mmnm{}\textsubscript{w/o T}& \textbf{\mmnm{}} \\
		\midrule
		\multirow{2}{*}{Baby} & R@20 & 0.1033 & 0.0872 & 0.1048\\
		& N@20 & 0.0446 & 0.0384 & 0.0465 \\
		\midrule
		\multirow{2}{*}{Sports} & R@20 & 0.1156 & 0.0977 & 0.1188 \\
		& N@20 & 0.0518 & 0.0435 & 0.0529 \\
		\midrule
		\multirow{2}{*}{Clothing} & R@20 & 0.0978 & 0.0738 & 0.0998 \\
		& N@20 & 0.0438 & 0.0331 & 0.0445 \\
		\bottomrule
	\end{tabular}
	\vspace{-12pt}
	\label{tab:perform_mm}	
\end{table}

\subsection{Ablation Study (\textbf{RQ3})}
This section presents an ablation study to assess the individual contributions of each component within \mmnm{} and the impact of its input multimodal features on recommendation performance.

Accordingly, we design the following variants of \mmnm{}: \emph{i)} \textbf{\mmnm{}\textsubscript{w/o PWC}} employs a simplified multimodal feature fusion approach in \mmnm{}. It forgoes the PWC strategy, resulting in the concatenation of all multimodal features.
\emph{ii)}  \textbf{\mmnm{}\textsubscript{w/o DCT}} excludes the shared multimodal feature derived from the 2-D DCT transformation and DNN, feeding \mmnm{} with only modality-specific features.
\emph{iii)}  \textbf{\mmnm{}\textsubscript{w/o V}} learns user/item representations and conducts recommendation solely on item features excluding \textit{visual} information.
\emph{iv)}  \textbf{\mmnm{}\textsubscript{w/o T}}  learns user/item representations and conducts recommendation solely on item features excluding \textit{textual} information.

\textit{\textbf{Component Ablation}}.
Table~\ref{tab:perform_com} summarizes the recommendation performance of \mmnm{} under different component ablation settings.
Analysis of Table~\ref{tab:perform_com} reveals that the PWC component has the most significant impact on \mmnm{}'s recommendation performance. Despite this, \mmnm{} remains competitive with most baseline models even without PWC, highlighting the effectiveness of the proposed recommendation paradigm. The multimodal feature transformation contributes less than PWC, but it still improves \mmnm{}'s NDCG@20 score by 2.95\%.

\textit{\textbf{Multimodal Feature Ablation}}.
To analyze the relative importance of visual and textual features in our multimodal model, we conducted a feature ablation study.
Specifically, we trained separate models where either the visual or textual input channel was removed entirely. 
The performance of these models was then compared to the full multimodal model on a held-out test set in Table~\ref{tab:perform_mm}. This controlled setting allowed us to isolate the contribution of each modality to the recommendation accuracy. Our findings first revealed that the model (\mmnm{}\textsubscript{w/o V}) trained solely on textual features achieved a significantly higher recommendation performance compared to the model (\mmnm{}\textsubscript{w/o T}) trained only on visual features. This suggests that textual information plays a more prominent role in obtaining accurate recommendations within this specific domain~\cite{FREEDOM2023MM}.
Comparative analysis of performance metrics in Tables \ref{tab:perform_mm} and \ref{tab:perform_com} reveals an unexpected observation: the model denoted as \mmnm{}\textsubscript{w/o V}, trained solely on textual features, outperforms the model \mmnm{}\textsubscript{w/o DCT}, which utilizes both features but lacks the DCT transformation and the shared representation in Eq.~\eqref{eq:sh}. This counterintuitive finding underscores the critical role of the 2-D DCT component in effectively integrating multimodal information within the \mmnm{} architecture 
It is also worth noting that \mmnm{}\textsubscript{w/o V}, exclusively equipped with textual features, outperforms all baselines equipped with both features.

\subsection{Hyperparameter Sensitivity Study}
To gain deeper insights into the behavior of the proposed \mmnm{} method, we conduct additional experiments on its components and hyperparameters.

\begin{figure}[bpt]
  \centering
  \includegraphics[width=0.9\linewidth, trim={10 10 10 10},clip]{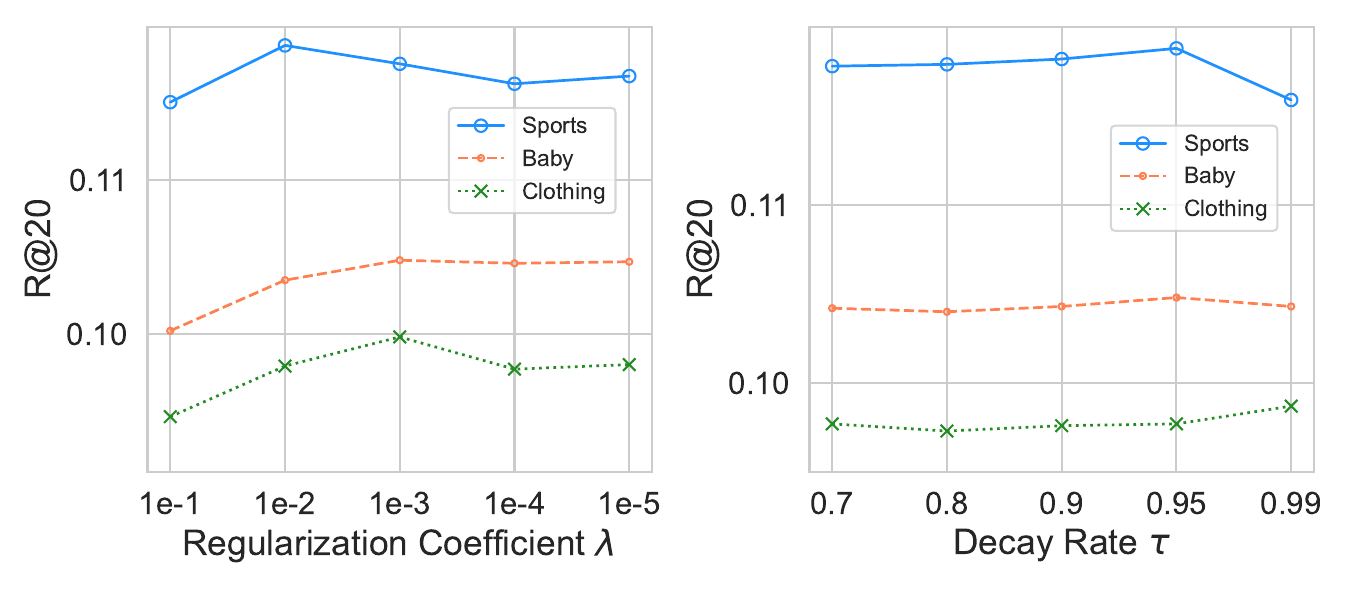}
  \caption[]{Performance of \mmnm{} with regard to different regularization coefficient $\lambda$ and decay rate $\tau$.}
  \label{fig:hyper}
  \vspace{-8pt}
\end{figure}

To assess the impact of hyperparameter settings on model performance, we conducted a sensitivity analysis focusing on the regularization coefficient ($\lambda$) and the decay rate $\tau$ in Fig.~\ref{fig:hyper}. The regularization coefficient was varied across set: \{1$e-$1, 1$e-$2, 1$e-$3, 1$e-$4, 1$e-$5\}, while the decay rate explored a wider range \{0.7, 0.8, 0.9, 0.95, 0.99\}. Our findings revealed that the model's performance was less sensitive to the decay rate than the regularization coefficient. This suggests that the model is more robust to variations in the decay rate, which controls the weight update during training. However, for the evaluated datasets, a larger decay rate ($\ge0.9$) usually yielded superior performance. In contrast, stronger regularization (higher $\lambda$) appears to impede the model's ability to learn effective representations from the multimodal data.

We investigated the effect of the number of layers in the user-item graph on the performance of \mmnm{}. The results are visualized in Fig.~\ref{fig:layers}. The figure suggests that \mmnm{} with GCNs can benefit from a higher number of layers when applied to larger user-item graphs, as observed in the Sports and Clothing datasets.

\begin{figure}[bpt]
  \centering
  \includegraphics[width=0.9\linewidth, trim={10 10 10 10},clip]{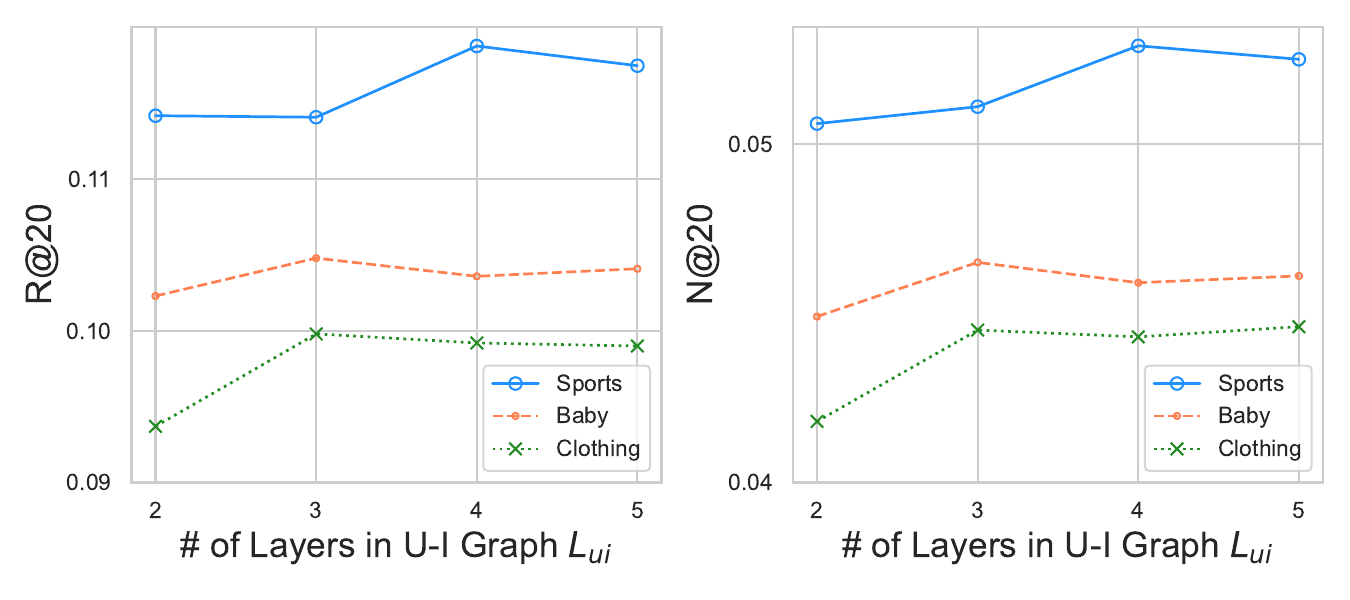}
  \caption[]{Performance of \mmnm{} \emph{w.r.t} varied layers in user-item graph.}
  \vspace{-12pt}
  \label{fig:layers}
\end{figure}

\section{Related Work}
\label{sec:relatedwork}
\subsection{Multimodal Recommender Systems}
Multimodal recommender systems augment the capabilities of traditional collaborative filtering (CF)~\cite{he2020lightgcn,zhou2023selfcf,zhou2023layer} by incorporating a more extensive spectrum of user and item information~\cite{VBPR2016AAAI, LATTICE2021MM, FREEDOM2023MM,zhang2024multiPRICAI,zhang2024multiCIKM}. This involves the assimilation of exogenous data modalities, including but not limited to textual descriptions and visual features, which transcend the boundaries of user-item interactions. As a result, these multimodal recommender systems demonstrably achieve superior performance, especially in situations with data sparsity or the presence of cold-start items~\cite{LATTICE2021MM,xu2021multi,zhang2024multimodal}. Within this domain, an extensive assortment of methodologies has been devised for the efficacious assimilation of multimodal data into modern recommendation architectures.

Early integration methods for multimodal information primarily focused on feature-level fusion. These approaches include the concatenation of multimodal features with ID embeddings~\cite{VBPR2016AAAI, liu2017deepstyle} and attention-based weighting of item features within the embedding space~\cite{chen2017attentive, chen2019personalized}. While these models demonstrate significant performance improvements in recommendation accuracy, they lack the ability to capture higher-order relationships between users and items (or items themselves).
Motivated by the success of GCNs and their inherent suitability for representing user-item interaction graphs, researchers have explored their integration with multimodal recommender systems. MMGCN~\cite{wei2019mmgcn}, for instance, constructs modality-specific user-item bipartite graphs and allows for the joint propagation of user/item ID embeddings and multimodal features. However, this approach fails to differentiate the importance of features from different modalities and effectively align ID embeddings with multimodal representations, resulting in suboptimal performance. To address this, GRCN~\cite{wei2020graph} introduces a refined message passing process within the GCN framework, dynamically weighting interaction edges between users and items based on multimodal information. FREEDOM~\cite{FREEDOM2023MM} builds upon this concept by further eliminating irrelevant edges through a degree-sensitive pruning method.

While multimodal information primarily resides with items, propagating it through the original user-item interaction graph can introduce significant noise, particularly over longer distances. To address this challenge, SGFD~\cite{liu2023semantic} adopts a feature-level approach, employing a semantic-guided teacher-student framework to perform knowledge distillation on the multimodal features. The distilled knowledge can then be leveraged by subsequent multimodal recommendation models, such as GRCN~\cite{wei2020graph} or BM3~\cite{BM32023WWW}, for effective recommendation.
An alternative research trajectory has been investigating the augmentation of prevailing multimodal models through the integration of supplementary item-item graphs~\cite{LATTICE2021MM, MGCN2023MM, FREEDOM2023MM}.
LATTICE~\cite{LATTICE2021MM} exemplifies this approach by explicitly constructing item-item relationship graphs for each modality and subsequently fusing them into a latent item-item representation. This latent graph is then dynamically updated using projected multimodal features obtained from MLPs, leading to state-of-the-art recommendation accuracy.
In contrast to LATTICE, MGCN~\cite{MGCN2023MM} adopts a static item-item graph, performing information propagation alongside each modality for the items. This method leverages an attention mechanism to differentiate the contribution of features from different modalities, leading to improved performance. 
Instead of relying on the item-item graph, MM-FRec~\cite{song2023mm} utilizes other types of graphs, such as the user-item-attribute tripartite graph and the user-image bipartite graph, to learn user and item representations.
Another line of work employs self-supervised learning paradigms for the integration of multimodal data within the structure of the recommendation model~\cite{tao2022self, BM32023WWW}. These methodologies formulate various contrastive learning objectives between the multimodal feature representations and user/item embeddings, thereby efficaciously catalyzing their learning process.

In summary, existing multimodal recommender systems leverage item ID embeddings as a core component. These approaches primarily focus on either incorporating multimodal features into item ID embeddings through alignment or fusion techniques~\cite{VBPR2016AAAI, BM32023WWW}, or exploiting latent relationships within item multimodal features to enhance the representation learning process~\cite{FREEDOM2023MM}. While acknowledging the importance of item ID embeddings in recommendation, our proposed method, \mmnm{}, departs from this paradigm. \mmnm{} eliminates the need for item ID embeddings altogether, instead focusing on directly learning user and item representations centered around the multimodal information. 

\vspace{-5pt}
\section{Conclusion}
In this paper, we shed light on the multimodal feature learning challenge that is inherent in state-of-the-art multimodal recommendation models, substantiated by empirical experiments. On this foundation, we propose  \mmnm{}, a model that exclusively and directly learns item representations from multimodal features. This methodology dispenses with the need for item ID embeddings, a convention prevalent in almost all current multimodal recommender systems. \mmnm{} directly gleans informative item representations from their respective multimodal features and refines these representations in conjunction with users via collaborative filtering. Furthermore, to promote the exchange of multimodal information, we construct a multimodal transformation mechanism employing 2-D DCT and DNNs. Both modality-specific and shared information are subsequently incorporated into a progressive weight copying module to ascertain the relevance of the information. Comprehensive experiments corroborate the superior performance of \mmnm{} within the domain of multimodal recommendations. 
As future research directions, we propose extending \mmnm{} to sequential recommendation scenarios. Specifically, we aim to investigate how the progressive weight copying techniques developed in \mmnm{} can enhance the performance of sequential recommender systems by incorporating temporal dynamics and multimodal feature interactions. This integration presents opportunities to explore the temporal evolution of user preferences across multiple modalities while leveraging the efficient progressive weighting mechanisms established in our framework.

 {\appendix[Case Study: Text Generation Using MLLMs]
 We demonstrate the application of MLLM (\ie Llama 3.2 Vision)for generating text based on the cover image of a product within the Clothing dataset.
 For each product cover image, denoted by its unique identifier `ASIN', we formulate structured prompts to guide MLLMs in generating comprehensive descriptions with respect to predefined product attributes. 
 Through extensive pre-training on vision-language datasets, MLLMs exhibit the capacity to discern product features that may be overlooked in original item descriptions but are potentially of significant value to users. This capacity to extract implicit visual attributes effectively augments existing product metadata, proving particularly beneficial for products with limited or absent descriptive information. 
 For instance, item `B000051SEN', illustrated in Fig.~\ref{fig:mllmstudy}, lacks textual descriptions within the original dataset. However, MLLMs effectively generate descriptive text, accurately capturing the item's salient features (right panel of Fig.~\ref{fig:mllmstudy}).

 \begin{figure*}[ht]
   \centering
   \includegraphics[width=0.82\linewidth, trim={16 115 16 130},clip]{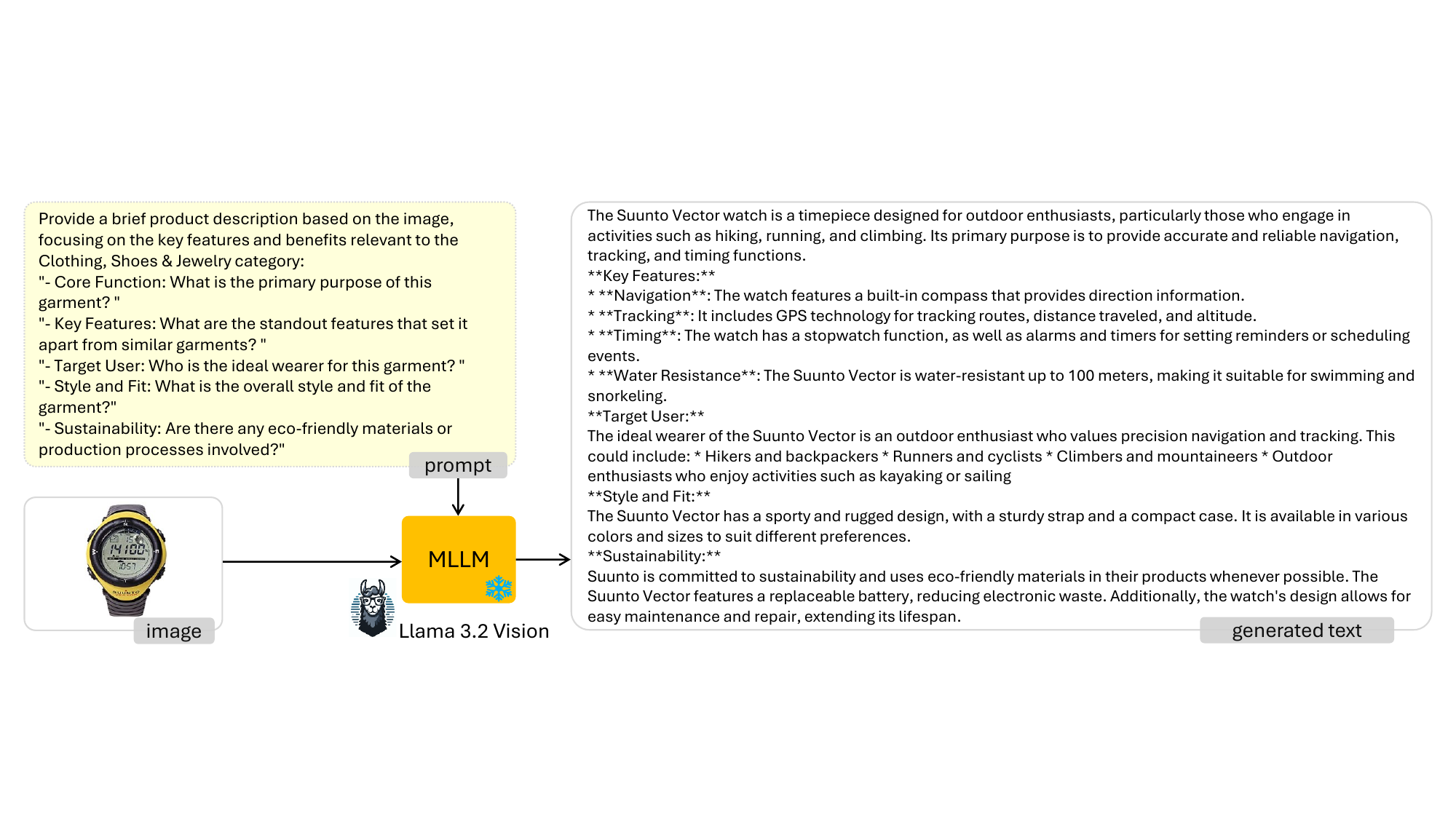}
   \caption[]{Case study on MLLM-based text generation.}
     \vspace{-16pt}
   \label{fig:mllmstudy}
 \end{figure*}
 }

\bibliographystyle{IEEEtran}
\bibliography{IEEEabrv,main}

\end{document}